\documentclass[10pt, a4paper, reqno]{amsart}

\usepackage{mathrsfs}
\usepackage{microtype}
\usepackage{braket}
\usepackage{tikz}
\usetikzlibrary{calc}
\usepackage[compat=newest]{yquant}
\usepackage{hyperref}
\usepackage{enumitem}

\DeclareMathOperator{\diag}{diag}
\DeclareMathOperator{\poly}{poly}

\newtheorem{theorem}{Theorem}
\newtheorem{corollary}[theorem]{Corollary}
\newtheorem{proposition}[theorem]{Proposition}
\newtheorem{lemma}[theorem]{Lemma}
\theoremstyle{definition}
\newtheorem{definition}[theorem]{Definition}
\newtheorem{remark}[theorem]{Remark}

\newcommand{\R}{\mathbf{R}}
\newcommand{\PH}{\mathsf{PH}}
\newcommand{\PP}{\mathsf{PP}}
\newcommand{\PostBQP}{\mathsf{PostBQP}}
\newcommand{\PostBPP}{\mathsf{PostBPP}}
\newcommand{\PostQAOA}{\mathsf{PostQAOA}}
\newcommand{\PostIQP}{\mathsf{PostIQP}}

\makeatletter
\let\orig@@setauthors\@setauthors
\def\@setauthors{%
  \orig@@setauthors
  {\centering\footnotesize
   \textit{Center for Quantum Computing Science, University of Latvia}\\
   \texttt{\{ralfs.abolins, andris.ambainis\}@lu.lv}\par}%
}
\makeatother

\title[A sharp interaction-degree threshold for simulating QAOA]{A sharp interaction-degree threshold \\for simulating QAOA}
\author{Ralfs Āboliņš}
\author{Andris Ambainis}

\begin{document}

\begin{abstract}
We identify a sharp interaction-degree threshold for the classical simulation of
QAOA with $2$-local cost functions.  At degree~$3$, classical sampling from
depth-$1$ QAOA, even within multiplicative error $2^{n^{s}}$ for any fixed $s <
1$, would collapse the polynomial hierarchy to its third level.  At degree~$2$,
exact classical sampling from depth-$p$ QAOA on $n$ qubits runs in polynomial
time whenever $p = O(\log n)$.  The hard degree-$3$ instances have trivially
optimizable cost functions, so sampling hardness does not by itself imply a
quantum optimization advantage.
\end{abstract}

\maketitle

\section{Introduction}\label{sec:intro}

The Quantum Approximate Optimization Algorithm, QAOA~\cite{farhi2014quantum}, is
one of the main near-term quantum algorithms.  Farhi and
Harrow~\cite{farhi2016quantum} showed that, already at depth~$1$, classical
sampling from the output distribution of QAOA with sufficiently small
multiplicative error would collapse the polynomial hierarchy.  Their result is
often cited as evidence that even very shallow QAOA can exhibit behavior
inaccessible to classical sampling.

We determine how structurally simple a QAOA instance can be while its output
distribution remains hard to sample classically.  We focus on $2$-local cost
functions $C(x_1,\ldots,x_n) = \sum_i C_i(x_1,\ldots,x_n)$, where each term
couples at most two variables.  Such functions encode standard optimization
problems such as MaxCut.  The pairwise couplings in the cost function form an
interaction graph, and a natural measure of structural complexity is the maximum
degree of this graph, which we call the interaction degree.  The construction of
Farhi and Harrow produces instances in which the interaction degree grows with
the input size, leaving open whether any fixed degree suffices for hardness.

We show that degree~$3$ already suffices.  Classical sampling from depth-$1$
QAOA of interaction degree at most~$3$ within multiplicative error $2^{n^s}$ for
any fixed $0 \leq s < 1$ would collapse the polynomial hierarchy to its third
level.  This goes well beyond the constant tolerance of earlier arguments.
Moreover, the hard instances can be chosen so that the cost function is trivial
to optimize classically, so sampling hardness alone does not yield a quantum
optimization advantage.

We then show that at interaction degree~$2$, every marginal output probability
of a depth-$p$ QAOA circuit can be computed classically in time polynomial
in~$n$ and exponential only in~$p$.  In particular, exact sampling runs in
polynomial time whenever $p = O(\log n)$.  The simulation uses the Markov--Shi
algorithm for quantum circuits of small cut width~\cite{markov2008simulating}.

A comparable dichotomy holds classically for constraint satisfaction.  SAT
restricted to instances in which every variable occurs in at most~$3$ clauses
remains $\mathsf{NP}$-complete, but becomes polynomial-time solvable when every
variable occurs in at most~$2$ clauses~\cite{tovey1984simplified}.  The
transition from 2-SAT to 3-SAT~\cite{Karp1972, GareyJohnson1979} is the most
familiar instance of the same pattern.  In the classical case the parameter
separates $\mathsf{P}$ from $\mathsf{NP}$-completeness, while for QAOA it
separates classical simulability from sampling hardness.

All simulation bounds count arithmetic operations on matrix entries.  In every
circuit of Section~\ref{sec:hardness}, and in every instance of
Section~\ref{sec:easy} with integer cost values and angles that are integer
multiples of $\pi/4$, the entries lie in $\mathbf{Q}(e^{i\pi/4})$, so every
intermediate value of the simulation stays in this field and carries rational
coefficients of bit length at most linear in the circuit size.  The same
asymptotic bounds therefore hold in the bit model, and exact sampling runs in
expected polynomial time by comparing random bits against the computable binary
expansions of the conditional probabilities, while for arbitrary real parameters
the bounds of Section~\ref{sec:easy} are understood in the real-arithmetic
model.

\section{Preliminaries}\label{sec:prelim}

QAOA takes as input a cost function $C(x_1,\ldots,x_n)$ on $n$ binary variables
together with a decomposition $C = \sum_i C_i$ into terms each depending on at
most two variables.  As a diagonal operator on $n$ qubits, $C$ acts by $C\ket{z}
= C(z)\ket{z}$.  The algorithm constructs the depth-$p$ state
\[
\ket{\pmb{\gamma}, \pmb{\beta}}
= e^{-i\beta_p B}\, e^{-i\gamma_p C}\, \cdots\,
  e^{-i\beta_1 B}\, e^{-i\gamma_1 C}\,
  \ket{+}^{\otimes n},
\]
where $B = \sum_{j=1}^n X_j$ is the mixing operator and $\pmb{\gamma},
\pmb{\beta} \in \R^p$ are variational parameters.  The state is measured in the
computational basis.  Since each $C_i$ acts on at most two qubits,
$e^{-i\gamma_k C}$ factors into commuting one- and two-qubit diagonal gates, and
$e^{-i\beta_k B}$ into single-qubit rotations.  Each $2$-local term coupling
variables~$j$ and~$k$ produces a two-qubit diagonal gate on those qubits, and
the set of all such couplings forms a graph.

\begin{definition}\label{def:igraph}
Given a cost function expressed as a sum of $2$-local terms $C = \sum_i C_i$,
the \emph{interaction graph} of this decomposition is the simple graph on vertex
set~$[n]$ with an edge $\{j,k\}$ whenever some term $C_i$ depends nontrivially
on both variables~$j$ and~$k$, meaning that flipping either bit can change the
value of $C_i$.  The \emph{interaction degree} is the maximum vertex degree of
this graph.  
We write
\emph{degree} for interaction degree when the meaning is clear.
\end{definition}


Postselection is the ability to condition on an event in a computation that may
occur with exponentially small probability and is therefore not physically
realizable in polynomial time.  In quantum computation, the conditioning event
is a measurement outcome on a designated register, and the resulting connection
to counting complexity is the identity $\PostBQP = \PP$.  The quantum class
$\PostBQP$ conditions on a designated register returning all zeros in a quantum
circuit, while its classical counterpart $\PostBPP$ conditions on a designated
bit of a probabilistic algorithm.  Both definitions require the conditioning
event to have positive probability so that the conditional distribution is
well-defined.  This probability may be exponentially small, so implementing
post-selection by sampling until the conditioning event occurs would take
exponential time.

\begin{definition}\label{def:postbqp}
A language $L$ is in $\PostBQP$ if there exists a polynomial-time uniform family
of quantum circuits, each acting on $\ket{w}\ket{0^{m(|w|)}}$ and equipped with
a designated output qubit and a post-selection register, such that for every
input~$w$:
\begin{enumerate}[label=(\roman*)]
\item the probability of the post-selection register returning $\ket{0\cdots 0}$
      is nonzero;
\item if $w \in L$, the output qubit is~$1$ with probability at least $2/3$,
      conditioned on post-selection;
\item if $w \notin L$, the output qubit is~$0$ with probability at least $2/3$,
      conditioned on post-selection.
\end{enumerate}
\end{definition}

The resulting output distribution of a postselected computation is the original output
distribution conditioned on the postselection event.  Its probabilities are
therefore ratios of event probabilities, so the error notion used in the
sampling argument must preserve such ratios.

\begin{definition}\label{def:multerr}
A classical algorithm samples with \emph{multiplicative error} $c \geq 1$ from a
distribution~$\mathcal{D}$ on $\{0,1\}^n$ if it produces samples from a
distribution~$\mathcal{D}'$ satisfying $c^{-1}\mathcal{D}(x) \leq
\mathcal{D}'(x) \leq c\mathcal{D}(x)$ for every~$x$.
\end{definition}

By summing the pointwise bounds, the same inequalities hold for every event.
Thus the ratio of two event probabilities is preserved up to a factor~$c^2$.
Throughout, a sampler runs in time polynomial in the circuit size, and its
tolerance $c = c(n)$ may depend on the number of qubits~$n$.

As we will see, our goal is to collapse the polynomial hierarchy to its third level under certain assumptions.  To do so, it suffices to place
$\PostBQP$ inside $\PostBPP$, and to apply three standard results.  First, Toda's
theorem places $\PH$ in $\mathsf{P}^{\PP}$~\cite{toda1991pp}, and 
secondly Aaronson's identity $\PostBQP =
\PP$~\cite{aaronson2005quantum} expresses every $\PP$ computation as a
postselected quantum circuit.
The containment of $\PostBQP$ in $\PostBPP$ therefore implies $\mathsf{P}^{\PP}$ in $\mathsf{P}^{\PostBPP}$.
Finally, $\PostBPP$ can be seen to coincide with the threshold class $\mathsf{BPP}_{\mathrm{path}}$,
which Han, Hemaspaandra, and Thierauf place in
$\mathsf{P}^{\Sigma_2[\log]}$~\cite[Theorem~3.11]{hanhema1997}, hence in
$\Delta_3$.

Throughout, postselected quantum circuits consist
of gates from $\mathscr{G} = \{H, T^\dagger, CZ\}$, a universal
set~\cite{aharonov2003simple} for which the identity $\PostBQP = \PP$ holds.
The exact gate set matters, because postselection on an event of small enough
probability arbitrarily magnifies the residual amplitude error of any
approximate compilation, such as one produced by the Solovay--Kitaev
algorithm~\cite{solovay_kitaev}.  

The following lemma derives the collapse of the polynomial hierarchy to its third level from a classical-sampling hypothesis and applies to any
circuit family that admits an exact linear-size compilation of postselected
quantum computation.  The degree-$3$ QAOA and IQP statements follow as immediate
instances.

\begin{lemma}\label{lem:template}
Let $\mathcal{C}$ be a family of postselected circuits such that every
$\PostBQP$ computation of size $T$ compiles in polynomial time into a single
circuit of $\mathcal{C}$ on $O(T)$ qubits with the same postselected output
distribution.  If every $n$-qubit circuit of $\mathcal{C}$ can be sampled
classically in polynomial time within multiplicative error $c(n) \leq 2^{n^{s}}$
for some constant $0 \leq s < 1$, then $\PH \subseteq \Delta_3$.
\end{lemma}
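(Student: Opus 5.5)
The plan is to show $\PostBQP \subseteq \PostBPP$ under the sampling hypothesis, and then to chain the three standard facts recalled above: Toda's theorem, Aaronson's $\PostBQP=\PP$, and $\PostBPP=\mathsf{BPP}_{\mathrm{path}}\subseteq \mathsf{P}^{\Sigma_2[\log]}\subseteq\Delta_3$. Together these give $\PH\subseteq \mathsf{P}^{\PP}\subseteq \mathsf{P}^{\PostBPP}\subseteq \mathsf{P}^{\Delta_3}=\Delta_3$. The last equality holds because a polynomial-time machine with an oracle for $\mathsf{P}^{\Sigma_2}$ can simulate that oracle directly with its own $\Sigma_2$ oracle.

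The central step is an amplification that makes the error bound $c(n)^2$ harmless. Fix $L\in\PostBQP=\PP$, decided by a postselected circuit of size $T$ with success probability $2/3$. First boost the conditional success probability to $1-2^{-k}$ for a parameter $k$. I will run $m=O(k)$ independent copies in parallel, postselect on all of their postselection registers being zero, and output the majority of the copies' output qubits. Conditioned on the joint postselection event, the copies are independent and each is correct with probability at least $2/3$, so a Chernoff bound gives error at most $2^{-k}$. I expect to need computing the majority inside the circuit exactly with $\mathscr{G}$ gates. Instead, I would postselect the copies individually and treat the majority as classical postprocessing of the sampled bits, since the compilation to $\mathcal{C}$ preserves the joint postselected output distribution. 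The combined circuit has size $O(kT)$, and by hypothesis it compiles to a circuit in $\mathcal{C}$ on $n=O(kT)$ qubits.

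Next I apply the sampler. Let $P$ be the event that all postselection bits are zero, and let $A$ be the event that $P$ holds and the majority is wrong. In the sampler's distribution $\mathcal{D}'$, both $\mathcal{D}'(P)>0$ (by multiplicative closeness) and the ratio $\mathcal{D}'(A)/\mathcal{D}'(P)\le c(n)^2\,\mathcal{D}(A)/\mathcal{D}(P)\le 2^{2n^s}2^{-k}$. I choose $k$ so that $k-2(CkT)^s\ge 2$. Since $s<1$, taking $k=T^{a}$ with $a>s/(1-s)$ makes $(CkT)^s=C^sT^{s(1+a)}=o(T^{a})$, so this works for large $T$; this is the only place the condition $s<1$ is used and where I expect the care to be needed. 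The classical sampler, viewed as a polynomial-time randomized algorithm that postselects on its bits of $P$ being all zeros (these can be combined into one designated bit by a classical AND), then decides $L$ with conditional error at most $1/4<1/3$, and the postselection event has positive probability. This places $L\in\PostBPP$.

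Finally, the whole reduction is uniform and takes polynomial time: building the amplified circuit, compiling it into $\mathcal{C}$, and running the sampler each take time polynomial in $T$. This yields $\PostBQP\subseteq\PostBPP$, and the chain in the first paragraph completes the proof.
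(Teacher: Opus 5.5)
Your proposal is correct and is essentially the paper's proof. You amplify to conditional error $2^{-k}$ with $k=\lceil T^{a}\rceil$ for $a>s/(1-s)$, so the distortion $c(n)^2\le 2^{2n^s}=2^{o(k)}$ is absorbed, and conclude via Toda's theorem and $\PostBPP=\mathsf{BPP}_{\mathrm{path}}\subseteq\Delta_3$; you bound the error-to-postselection ratio where the paper bounds the odds $p_1/p_0$, but it is the same factor-$c^2$ argument.

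The one point to tighten is taking the majority classically after sampling. That needs the compilation to preserve the joint postselected distribution of all output bits, which is slightly more than the lemma grants for a $\PostBQP$ computation with a single designated output qubit. Compute the majority inside the circuit, as the paper does. This is exactly possible over $\mathscr{G}$: Toffoli has an exact Clifford$+T$ decomposition, $T=(T^\dagger)^7$, and CNOT is $CZ$ conjugated by $H$ on the target.
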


\begin{proof}
It suffices to prove $\PP \subseteq \PostBPP$.  To do so, let $L \in \PP$ and
let $w \in \{0,1\}^m$ be an input.  We construct a $\PostBPP$ computation
deciding whether $w \in L$.  First, the equivalence $\PostBQP=\PP$ gives a
postselected quantum computation for $L$ of size $t=\poly(m)$ on inputs of
length $m$.  Thus we may amplify this computation by running it independently
$k$ times on $w$, postselecting on all $k$ postselection events, and taking the
majority output.  The resulting postselected computation has size $O(kt)$ and,
for some constant $a>0$, independent of $m$ and $k$, conditional error at most
$2^{-ak}$~\cite[\S3]{aaronson2005quantum}.

By the hypothesis that every $\PostBQP$ computation compiles into $\mathcal{C}$,
the amplified computation becomes a circuit $Q \in \mathcal{C}$ on $n =
O(kt)$ qubits with the same postselected output distribution.  Let $p_b$ be the
probability that $Q$ returns $b$ on the output qubit and all zeros on the
postselection register.  The $\PostBPP$ machine runs the assumed classical
sampler for $Q$, postselects on the same postselection register being all zeros,
and returns the sampled output bit.  We show that this postselected classical
computation has bounded error.

Let $\tilde p_b$ be the corresponding probability for the sampler.  If either
$p_0$ or $p_1$ vanishes, the multiplicative bounds preserve that zero and the
positivity of the other probability.  The sampled postselected computation is
then correct with probability one, so we may assume $p_0p_1>0$.  The same
mutiplicative bounds give $\tilde p_0\tilde p_1>0$, so the sampler's odds are 
well defined.

Let $R=p_1/p_0$ be the true postselected odds, and let $\tilde R=\tilde
p_1/\tilde p_0$ be the sampler's postselected odds.  Our amplified computation
has conditional error at most $2^{-ak}$, so $R \geq 2^{ak-1}$ when $w \in L$ and
$R \leq 2^{-ak+1}$ when $w \notin L$, for all sufficiently large $k$.  The
multiplicative bounds distort the odds by at most a factor of $c(n)^2$, namely
\[
\frac{R}{c(n)^2} \leq \tilde R \leq c(n)^2 R.
\]

We now set $k=\lceil t^r \rceil$ for a constant $r > s/(1-s)$.  Then, $n=O(kt)$.
This gives $n^s=O(t^{s(r+1)})=o(t^r)=o(k)$, and hence $c(n)^2 \leq
2^{2n^s}=2^{o(k)}$.  The sampler's multiplicative error is therefore absorbed by
the amplified odds gap, so $\tilde R>2$ when $w \in L$ and $\tilde R<1/2$ when
$w \notin L$, for all sufficiently large input lengths.  Since odds above $2$
correspond to conditional acceptance probability above $2/3$, and odds below
$1/2$ correspond to conditional acceptance probability below $1/3$, these two
inequalities give bounded error.  These two inequalities give bounded error
when the input length is sufficiently large.  If that is not the case, then 
the inputs may be hardwired to give a valid $\PostBPP$ computation for $L$.
Thus $\PP \subseteq \PostBPP$, and therefore
\[
\PH \subseteq \mathsf{P}^{\PP}
   \subseteq \mathsf{P}^{\PostBPP}
   \subseteq \mathsf{P}^{\Delta_3}
   = \Delta_3
\]
by Toda's theorem and $\PostBPP=\mathsf{BPP}_{\mathrm{path}}\subseteq
\mathsf{P}^{\Sigma_2[\log]}\subseteq\Delta_3$.  Consequently, the polynomial
hierarchy collapses to its third level.
\end{proof}

\begin{remark}
Tracking the odds~$R = p_1/p_0$ improves on the direct argument of Bremner,
Jozsa, and Shepherd~\cite{bremner2011classical}, which tracks the conditional
acceptance probability and requires $c < \sqrt{2}$.
\end{remark}

We now fix the postselected QAOA class used in the degree-restricted statement.

\begin{definition}\label{def:postqaoa}
A language $L$ is in $\PostQAOA_p^{\deg \leq d}$ if there exist a constant
$\varepsilon$ with $0 < \varepsilon < \tfrac{1}{2}$ and a uniform family of
depth-$p$ QAOA circuits of interaction degree at most~$d$, each equipped with a
designated output qubit and a postselection register, such that for every
input~$w$:
\begin{enumerate}[label=(\roman*)]
\item the probability of the postselection register returning $\ket{0\cdots 0}$
      is nonzero;
\item if $w \in L$, the output qubit is~$1$ with probability at least $1 -
      \varepsilon$, conditioned on postselection;
\item if $w \notin L$, the output qubit is~$0$ with probability at least $1 -
      \varepsilon$, conditioned on postselection.
\end{enumerate}
We write $\PostQAOA_p = \bigcup_d \PostQAOA_p^{\deg \leq d}$ for the class with
no degree restriction.
\end{definition}

\section{Hardness at interaction degree three}\label{sec:hardness}

Two prior results establish sampling hardness for restricted circuit families
using similar postselection arguments.  Bremner, Jozsa, and
Shepherd~\cite{bremner2011classical} proved $\PostIQP = \PP$, where IQP circuits
have the form $H^{\otimes n} D\, H^{\otimes n}\ket{0^n}$ with $D$ diagonal in
the computational basis.  Farhi and Harrow~\cite{farhi2016quantum} extended this
to QAOA by proving $\PostQAOA_1 = \PostBQP$.  Both constructions reduce an
arbitrary $\PostBQP$ circuit to the restricted family by replacing each
intermediate Hadamard with a postselected diagonal coupling to a fresh auxiliary
qubit.  The two constructions differ only in the single-qubit gate $F$ that
completes each step, and hence in the diagonal coupling it requires.

We construct the diagonal coupling for general $F$ and add a preprocessing step
that reduces the interaction degree from unbounded to $3$.  That is, we prove
$\PostQAOA_1^{\deg\leq 3} = \PostBQP$ and obtain $\PostIQP^{\deg\leq 3} = \PP$
as a corollary, writing $\PostIQP^{\deg\leq d}$ for the subclass of $\PostIQP$
in which $D$ has interaction degree at most $d$.

Our general construction will transform the input circuit so that intermediate
Hadamard gates are replaced by postselected diagonal couplings, and all
remaining pre-mixer gates commute and can be collected into a single QAOA phase
layer.  That is, every gate of $\mathscr{G}$ except the Hadamard is diagonal, 
and diagonal
operators on $n$ qubits form a commutative subalgebra, so any product of such
operators remains diagonal and merges into one phase layer.  The Hadamard lies
outside this subalgebra, so each occurrence must be replaced by a diagonal
coupling to an auxiliary qubit.  To replace a Hadamard on qubit $j$, we prepare
an auxiliary $a$ in $\ket{+}$ and apply a diagonal two-qubit gate $W$ to
$(a,j)$, then apply $F$ to $j$ and postselect $j$ on $\ket{0}$.  Consequently,
qubit $a$ carries the logical wire, and we want to choose $W$ so that the map
from $j$ to $a$ acts as a Hadamard.  Figure~\ref{fig:deg}b illustrates the
coupling substitution.  For brevity, let us write $w_{ab} = \bra{a,b}W\ket{a,b}$
and $r_b = \bra{0}F\ket{b}$.  The map from $j$ to $a$ is then $V \colon
\mathbf{C}^2_j \to \mathbf{C}^2_a$ defined by
\[
V\ket{b}_j = \bra{0}_j F_j W_{a,j}
\bigl(\ket{+}_a \otimes \ket{b}_j\bigr)
\]
and has entries $V_{ab} = w_{ab} r_b/\sqrt{2}$.  The condition $V = \lambda H$
determines $W$ entrywise.

\begin{proposition}\label{prop:gadget}
Suppose $r_0 r_1 \neq 0$.  Then $V = \lambda H$ if and only if
\begin{equation}\label{eq:gadget_sol}
w_{ab} = \frac{\lambda\,(-1)^{ab}}{r_b}.
\end{equation}
A unitary diagonal $W$ with this property exists if and only if $|r_0| = |r_1|$,
and is then unique up to global phase.
\end{proposition}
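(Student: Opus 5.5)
The first claim is an entrywise comparison. By the computation preceding the proposition, $V_{ab} = w_{ab} r_b/\sqrt{2}$, while $(\lambda H)_{ab} = \lambda(-1)^{ab}/\sqrt{2}$. So $V = \lambda H$ holds if and only if $w_{ab} r_b = \lambda(-1)^{ab}$ for all $a,b\in\{0,1\}$. Since $r_0 r_1 \neq 0$, we may divide by $r_b$, which gives exactly \eqref{eq:gadget_sol}. Both directions are immediate because each step is an equivalence. One caveat: we should restrict to $\lambda \neq 0$, since otherwise $W=0$. The unitarity part forces this anyway, so it costs nothing to assume it from the outset.

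For the existence claim, note that a diagonal $W$ is unitary exactly when every diagonal entry has modulus one. By \eqref{eq:gadget_sol}, $|w_{ab}| = |\lambda|/|r_b|$, which is independent of $a$.

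If $|r_0| = |r_1| =: \rho$, we choose any $\lambda$ with $|\lambda| = \rho$. Then all four entries have modulus one, so $W$ is unitary.

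Conversely, suppose some unitary diagonal $W$ satisfies \eqref{eq:gadget_sol} for some $\lambda$. Then $|\lambda|/|r_0| = 1 = |\lambda|/|r_1|$, hence $|r_0| = |r_1|$. This also shows $|\lambda| = |r_0| \neq 0$, which justifies the restriction $\lambda \neq 0$ above.

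Uniqueness up to global phase follows directly from the formula. Once $W$ is unitary, $\lambda$ is constrained only by $|\lambda| = |r_0|$, so any two admissible choices differ by a unit scalar $e^{i\theta}$. By \eqref{eq:gadget_sol}, $W$ depends linearly on $\lambda$, so the corresponding $W$'s differ by the same global phase $e^{i\theta}$. Conversely, multiplying $W$ by a phase multiplies $\lambda$ by that phase, so the whole family of solutions is a single orbit under global phase.

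I do not expect a genuine obstacle. The only point needing care is to interpret ``$V=\lambda H$'' with $\lambda$ as a free nonzero scalar, and to note that unitarity is what pins down $|\lambda|$.
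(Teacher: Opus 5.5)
Your proof is correct and takes essentially the same route as the paper. You compare $V_{ab} = w_{ab}r_b/\sqrt{2}$ entrywise with $\lambda(-1)^{ab}/\sqrt{2}$, then use $|w_{ab}| = |\lambda|/|r_b| = 1$ to force $|\lambda| = |r_0| = |r_1|$, leaving only the phase of $\lambda$ free as a global phase. Your caveat about $\lambda \neq 0$ is harmless but not needed for the first equivalence, which also holds at $\lambda = 0$ with $W = 0$; unitarity excludes that case anyway.
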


\begin{proof}
If $w_{ab}$ is given by~\eqref{eq:gadget_sol}, then $V_{ab} =
\lambda(-1)^{ab}/\sqrt{2}$, so $V = \lambda H$.  Conversely, $V = \lambda H$
forces $w_{ab}\,r_b = \lambda(-1)^{ab}$, which is~\eqref{eq:gadget_sol}.  For
the unitarity claim, $|w_{ab}| = 1$ for all $a,b$ holds if and only if
$|\lambda| = |r_0| = |r_1|$.  Only the phase of $\lambda$ remains free, and it
acts as a global phase on $W$.
\end{proof}

\begin{remark}\label{rem:recover_gadgets}
With $\lambda = 1/\sqrt{2}$, the Bremner--Jozsa--Shepherd substitution $W = CZ$
corresponds to $F = H$, and the Farhi--Harrow substitution $W = \diag(1,i,1,-i)$
to $F = \tilde{H} = e^{-i\pi X/4}$.
\end{remark}

\begin{figure}
    \centering

    \begin{tikzpicture}[transform shape, scale=0.75]
        \begin{yquant}
            qubit {$\ket{\psi}_j$} q;

            hspace {2mm} q;
            box {$D_1$} q;
            hspace {2mm} q;
            h q;
            h q;
            hspace {2mm} q;
            box {$D_2$} q;
            hspace {2mm} q;
            h q;
            h q;
            hspace {2mm} q;
            box {$D_3$} q;
            hspace {2mm} q;
        \end{yquant}
    \end{tikzpicture}

    \smallskip
    {(a) Insertion of $I = H^2$ between consecutive diagonal gates.}

    \bigskip

    \begin{tikzpicture}[scale=0.8]
        \begin{yquant}[register/separation=0mm]
            qubit {$\ket{\psi}_{j-2}$} qA;
            qubit {$\ket{\psi}_{j-1}$} qB;
            qubit {$\ket{+}_a$} aux;
            qubit {$\ket{\psi}_{j}$}   qC;
            qubit {$\ket{\psi}_{j+1}$} qD;
            qubit {$\ket{\psi}_{j+2}$} qE;

            hspace {0.5cm} -;
            text {$\cdots$} qA, qB, qC, qD, qE;
            hspace {0.1cm} -;
            box {$W$} (aux, qC);
            box {$\tilde{H}$} qC;
            text {$\bra{0}$} qC;
            discard qC;
            hspace {0.1cm} -;
            text {$\cdots$} qA, qB, aux, qD, qE;
            hspace {0.5cm} -;
        \end{yquant}
    \end{tikzpicture}

    \smallskip
    {(b) The coupling, with the auxiliary drawn adjacent to~$j$ for clarity.}

    \bigskip

    \begin{tikzpicture}[transform shape, scale=0.6]
        \begin{yquant}
            qubit {$\ket{+}_{a_6}$} a6;
            qubit {$\ket{+}_{a_5}$} a5;
            qubit {$\ket{+}_{a_4}$} a4;
            qubit {$\ket{+}_{a_3}$} a3;
            qubit {$\ket{+}_{a_2}$} a2;
            qubit {$\ket{+}_{a_1}$} a1;
            qubit {$\ket{\psi}_j$} q;

            hspace {2mm} q;
            box {$D_1$} q;
            hspace {2mm} q;

            box {$W$} (a1, q);
            box {$\tilde{H}$} q;
            text {$\bra{0}$} q;
            discard q;

            hspace {2mm} a1;
            box {$W$} (a2, a1);
            box {$\tilde{H}$} a1;
            text {$\bra{0}$} a1;
            discard a1;

            hspace {2mm} a2;
            box {$D_2$} a2;
            hspace {2mm} a2;

            box {$W$} (a3, a2);
            box {$\tilde{H}$} a2;
            text {$\bra{0}$} a2;
            discard a2;

            hspace {2mm} a3;
            box {$W$} (a4, a3);
            box {$\tilde{H}$} a3;
            text {$\bra{0}$} a3;
            discard a3;

            hspace {2mm} a4;
            box {$D_3$} a4;
            hspace {2mm} a4;

            box {$W$} (a5, a4);
            box {$\tilde{H}$} a4;
            text {$\bra{0}$} a4;
            discard a4;

            hspace {2mm} a5;
            box {$e^{i\pi Z/4}$} a5;
            hspace {2mm} a5;

            box {$W$} (a6, a5);
            box {$\tilde{H}$} a5;
            text {$\bra{0}$} a5;
            discard a5;

            hspace {2mm} a6;
            box {$\tilde{H}$} a6;
        \end{yquant}
    \end{tikzpicture}

    \smallskip
    {(c) Full substitution on a wire with three diagonal operators, including
    the end-of-wire step.}

    \caption{Degree reduction for three diagonal operators on a
    single wire.}
    \label{fig:deg}
\end{figure}

\begin{theorem}\label{thm:degree3}
For every language $L \in \PostBQP$, there exists a uniform family of
postselected depth-$1$ QAOA circuits of interaction degree at most~$3$ that
decides $L$ with bounded error.  Equivalently,
\[
\PostQAOA_1^{\deg\leq 3} = \PostBQP.
\]
Moreover, the underlying compilation runs in polynomial time, maps a
postselected circuit of size $T$ to a circuit on $O(T)$ qubits, and preserves
the postselected output distribution exactly.
\end{theorem}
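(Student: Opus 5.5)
The plan is to prove the two inclusions separately. The inclusion $\PostQAOA_1^{\deg\leq 3} \subseteq \PostBQP$ is immediate: a depth-$1$ QAOA circuit with integer-multiple-of-$\pi/4$ parameters compiles exactly into gates of $\mathscr{G}$, and the bounded-error constant $\varepsilon$ can be amplified as in the proof of Lemma~\ref{lem:template}. Alternatively, one simply notes that postselected QAOA circuits are postselected quantum circuits.

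The substance is the converse, which we prove by an explicit compilation. Start from a $\PostBQP$ circuit over $\mathscr{G}$ of size $T$ acting on $\ket{0^m}$.

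\emph{Normalisation.} First insert $H^2$ at the start of each wire so that it begins in $\ket{+}$ followed by an $H$. This makes every wire a sequence of diagonal blocks separated by Hadamards.

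\emph{Degree preprocessing.} The key observation is that a qubit touched by many $CZ$ gates has unbounded degree. To fix this, between any two consecutive diagonal gates on the same wire insert $I = H\cdot H$, as in Figure~\ref{fig:deg}a. After this step each maximal diagonal segment of a wire contains at most one gate of $\mathscr{G}$.

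\emph{Gadget substitution.} Replace every Hadamard by the gadget of Proposition~\ref{prop:gadget} with $F = \tilde H$ and $W = \diag(1,i,1,-i)$, as in Remark~\ref{rem:recover_gadgets}. This identity holds up to the scalar $\lambda = 1/\sqrt{2}$, which is nonzero and cancels in every conditional probability. Each gadget introduces one fresh auxiliary qubit prepared in $\ket{+}$, and this is exactly the QAOA initial state.

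\emph{Final measurement.} Every wire ends with a measurement in the computational basis, but QAOA applies the mixer $e^{-i\beta B}$ with $\beta = \pi/4$, that is $\tilde H$, to all qubits. Each logical wire must therefore finish with one more gadget followed by $e^{i\pi Z/4}$ so that the closing $\tilde H$ acts as required, as in Figure~\ref{fig:deg}c. This works because $\tilde H$ equals $H$ up to diagonal phases: concretely, $\tilde H \propto e^{-i\pi Z/4} H e^{-i\pi Z/4}$ or a similar identity, which I would verify. The outer diagonal factors are absorbed either into the phase layer or into irrelevant measurement phases.

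Every remaining gate is now diagonal and acts on the computational-basis-labelled qubits before the single mixer layer. Since diagonal gates commute, all of them merge into one phase operator $e^{-i\gamma C}$. For $\gamma = 1$ the 2-local cost $C$ has integer multiples of $\pi/4$ as its values, because $T^\dagger$, $CZ$, $W$ and $e^{i\pi Z/4}$ all have eighth-root-of-unity phases. Every postselected qubit $j$ receives $\tilde H$ from the mixer and is then postselected on $0$, so the original postselection register extends to include all gadget qubits.

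\emph{Degree count.} Each physical qubit represents one segment of a logical wire. It has at most one $W$-edge to its predecessor, at most one $W$-edge to its successor, and at most one $CZ$-edge, because its segment contains at most one gate. Single-qubit diagonal gates add no edges. Its degree is therefore at most $3$.

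\emph{Bookkeeping.} The inserted $H$ pairs and gadgets number $O(T)$, so the result has $O(T)$ qubits. The output qubit is the final physical qubit of the logical output wire. Exactness follows because each gadget is an exact identity up to a nonzero scalar, so the postselected distribution is unchanged.

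The main obstacle will be handling the final mixer. The mixer is $\tilde H$ rather than $H$, and it hits every qubit, including the logical ones at the end and the qubits inside gadgets. Making this consistent requires getting the conventions right: the order of $W$ with respect to which qubit is which, the end-of-wire phase correction, and the fact that the phase layer precedes the mixer. If the identity relating $\tilde H$ and $H$ picks up a diagonal on the wrong side, I would instead absorb it by choosing a different $F$-compatible $W$ via Proposition~\ref{prop:gadget}. That proposition allows any $F$ with $|r_0| = |r_1|$, so we have freedom here.
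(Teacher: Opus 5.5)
Your compilation matches the paper's in all its main steps: the gadget $W=\diag(1,i,1,-i)$ with $F=\tilde H$, insertion of $H^2$ so that every segment carries at most one diagonal gate, collection of all diagonal gates into one phase layer with $\pi/4$ angles, and the $2+1$ degree count. You also use one fact implicitly that you should state: no gate touches a gadget qubit after its $\tilde H$ and projection. This is what lets the mixer supply those $\tilde H$'s and lets all projections be deferred into one register.

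The one place you depart from the paper is the end of each wire, and there the identity you quote is false. In fact $e^{-i\pi Z/4}He^{-i\pi Z/4}=-i\,e^{i\pi X/4}$ is proportional to $\tilde H^\dagger$, while the correct relation is $\tilde H=-i\,e^{i\pi Z/4}He^{i\pi Z/4}$. The ending you describe is one extra gadget, then $e^{i\pi Z/4}$, then the mixer. Up to the gadget's factor $1/\sqrt2$ it acts on the logical wire as $\tilde H\,e^{i\pi Z/4}H=e^{i\pi Z/4}X$, which flips every final bit. The output is then inverted, and the original postselection register would have to be conditioned on $\ket{1\cdots1}$, which Definition~\ref{def:postqaoa} does not allow.

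The repair is only a sign. Follow the last gadget with $e^{-i\pi Z/4}$ instead. Then $\tilde H\,e^{-i\pi Z/4}H=-i\,e^{i\pi Z/4}$ is diagonal, so it does not affect computational-basis measurement or postselection. With that fix your ending is a legitimate variant of the paper's.

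The paper inserts $I=\tilde H\tilde H^\dagger$ and writes $\tilde H^\dagger=He^{i\pi Z/4}H$. Figure~\ref{fig:deg}c accordingly ends with two gadgets, not one. This spends two auxiliaries per wire but reproduces the original pre-measurement state exactly. Yours spends one auxiliary per wire and leaves a harmless residual diagonal.

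Your fallback of re-choosing $W$ is unnecessary, and as phrased it has no room to act. $F$ is the mixer itself, and for target $H$ Proposition~\ref{prop:gadget} fixes $W$ up to a global phase.
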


\begin{proof}
The inclusion $\PostQAOA_1^{\deg\leq 3} \subseteq \PostBQP$ is immediate.  For
the reverse, we will show that applying our coupling to each Hadamard converts
an arbitrary $\PostBQP$ circuit into depth-$1$ QAOA, and then bound the
interaction degree.  For $F = e^{-i\beta X}$ the unitarity condition $|r_0| =
|r_1|$ of Proposition~\ref{prop:gadget} reads $|\cos\beta| = |\sin\beta|$, and
the principal solution $\beta = \pi/4$ gives $W = \diag(1,i,1,-i)$.  One
verifies that for all states $\ket{\phi_0}$ and $\ket{\phi_1}$,
\[
\bra{0}_j \tilde{H}_j W_{a,j}
\bigl(\ket{+}_a \otimes
(\ket{0}_j\ket{\phi_0} + \ket{1}_j\ket{\phi_1})\bigr)
= \tfrac{1}{\sqrt{2}}\, H_a
\bigl(\ket{0}_a\ket{\phi_0} + \ket{1}_a\ket{\phi_1}\bigr).
\]
This identity replaces each intermediate Hadamard by a diagonal coupling to an
auxiliary, but applying it directly leaves the degree unbounded, since
consecutive diagonal gates on one wire all contribute edges to the same qubit.
We preprocess the circuit to enforce the invariant that consecutive diagonal
gates on each wire are separated by a Hadamard.

If a wire begins with a Hadamard, absorb it into the preparation $\ket{+} =
H\ket{0}$.  If it begins with a diagonal gate, prepend $H^2$ and absorb the
first $H$ into the preparation.  We separate consecutive diagonal gates by
inserting $I = H^2$, as in Figure~\ref{fig:deg}a. At the end of each wire,
insert $I = \tilde{H}\tilde{H}^\dagger$, absorb $\tilde{H}$ into the mixing
layer, and decompose $\tilde{H}^\dagger = H\, e^{i\pi Z/4}\, H$, producing two
Hadamards separated by a $1$-local diagonal.  After preprocessing, every wire
begins in $\ket{+}$ and ends with a single $\tilde{H}$.  The Hadamards partition
the wire into segments, the maximal intervals containing no Hadamard.  The first
segment begins at the state preparation, the last reaches the end of the wire,
and each segment now contains at most one diagonal gate.  After substitution the
wire occupies a distinct qubit during each segment, so the segments of a wire
correspond bijectively to the qubits of its chain in Figure~\ref{fig:deg}c.

With the invariant in place, we replace each intermediate Hadamard by the
coupling of Figure~\ref{fig:deg}b. After all substitutions, every qubit starts
in $\ket{+}$ and receives exactly one $\tilde{H}$ from the mixing layer.  All
remaining gates before the mixer are diagonal, with phases that are eighth roots
of unity.  Setting $\gamma = \pi/4$ therefore collects them into a single
integer-valued cost function $C$.  The compiled circuit is
\[
\tilde{H}^{\otimes n}\, e^{-i\tfrac{\pi}{4} C}\, \ket{+}^{\otimes n}
\]
with postselection on some qubits, which is depth-$1$ QAOA with $\beta = \gamma
= \pi/4$.

No gate acts on a qubit after its projection onto $\ket{0}$, so the
postselections commute with all subsequent operations and form a single
register, and each coupling contributes a factor of $1/\sqrt{2}$ that cancels in
the postselected distribution.  The output distribution equals that of the
original circuit.  Each intermediate Hadamard adds one auxiliary qubit and
preprocessing adds $O(1)$ gates per gate, so a size-$T$ circuit compiles into
one on $O(T)$ qubits in polynomial time.

It remains to bound the degree.  Along each wire the active qubits form a chain
of couplings, as in Figure~\ref{fig:deg}c. Each coupling moves the wire to a new
qubit, so each qubit participates in at most two coupling edges.  Preprocessing
ensured at most one diagonal gate per segment, so each qubit carries at most one
original $2$-local gate, contributing at most one additional edge.  The maximum
degree is $2 + 1 = 3$.
\end{proof}

\begin{corollary}\label{cor:degree3_iqp}
$\PostIQP^{\deg\leq 3} = \PP$.
\end{corollary}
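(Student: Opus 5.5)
The plan is to rerun the proof of Theorem~\ref{thm:degree3} with $F = H$ in place of $F = \tilde{H}$. By Remark~\ref{rem:recover_gadgets}, this choice of $F$ makes the coupling of Proposition~\ref{prop:gadget} the Bremner--Jozsa--Shepherd gadget $W = CZ$ with $\lambda = 1/\sqrt{2}$. The inclusion $\PostIQP^{\deg\leq 3} \subseteq \PostIQP \subseteq \PostBQP = \PP$ is immediate. For the reverse inclusion, take $L \in \PP = \PostBQP$, given by a postselected circuit over $\mathscr{G}$, and compile it into an IQP circuit of the form $H^{\otimes n} D\, H^{\otimes n}\ket{0^n}$ with postselection, in which $D$ has interaction degree at most $3$.

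First comes the preprocessing. A wire that begins with a Hadamard absorbs it into the initial $H\ket{0} = \ket{+}$; otherwise we prepend $H^2$ and absorb one $H$. Consecutive diagonal gates on a wire are separated by inserting $I = H^2$. At the end of each wire the step is simpler than in the QAOA case, since the final layer is already $H$: if the wire ends with a diagonal gate, we append $H^2$, and one $H$ is absorbed into the final layer $H^{\otimes n}$. Every wire then starts in $\ket{+}$, ends with a single $H$ belonging to the final layer, and each Hadamard-free segment contains at most one diagonal gate.

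Next, each intermediate Hadamard is replaced by the coupling: prepare an auxiliary qubit $a$ in $\ket{+}$, apply $CZ_{a,j}$, apply $H_j$, and postselect $j$ on $\ket{0}$. By Proposition~\ref{prop:gadget}, this implements $\tfrac{1}{\sqrt2}H$ from $j$ to $a$, and as in the theorem the identity extends linearly to entangled states. Two facts then give the IQP form. First, the $H_j$ preceding the projection onto $\bra{0}_j$ is the last gate on qubit $j$, so it merges into the final $H^{\otimes n}$ layer. Second, the postselections commute past later operations because nothing acts on $j$ afterwards. All remaining gates are diagonal ($T^\dagger$, $CZ$, the original diagonal gates) and commute, so they collect into a single diagonal $D$. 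The $1/\sqrt2$ factors are common to all outcomes and cancel in the postselected distribution, so the distribution is preserved exactly; the number of qubits is $O(T)$.

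The step that needs the most care is the degree bound, which I expect to go exactly as before. Each wire becomes a chain of qubits linked by gadget $CZ$ edges, so each qubit carries at most two chain edges. Each qubit corresponds to one segment, which holds at most one original diagonal gate, and this contributes at most one further edge, since $T^\dagger$ is $1$-local and adds none. The degree of $D$ is therefore at most $3$. One point to check is that the end-of-wire insertion of $H^2$ gives the last segment its own qubit and does not place two diagonal gates in one segment. With this compilation in hand, $\PP \subseteq \PostIQP^{\deg\leq3}$, which proves the corollary.
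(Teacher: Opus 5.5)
Your proposal is correct and follows the paper's own proof. You rerun Theorem~\ref{thm:degree3} with $F = H$ and $W = CZ$, observe that the end-of-wire step reduces to inserting $H^2$ and absorbing the trailing $H$ into the final Hadamard layer with no leftover $1$-local phase, and combine the result with $\PostBQP = \PP$. One small bookkeeping point: the paper inserts $H^2$ at the end of every wire, which also covers a wire whose only gate is the Hadamard absorbed into the initial $\ket{+}$. Your rule appends $H^2$ only after a final diagonal gate, so it misses that case.
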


\begin{proof}
Repeat the proof of Theorem~\ref{thm:degree3} with the IQP specialization $F =
H$, $W = CZ$.  The end-of-wire step simplifies, since inserting $I = H H$ and
absorbing the trailing Hadamard into the final Hadamard layer leaves no
$1$-local phase.  The compilation guarantees of Theorem~\ref{thm:degree3} carry
over, and combining with $\PostBQP = \PP$ gives the claim.
\end{proof}

\begin{corollary}\label{cor:collapse}
If depth-$1$ QAOA circuits of interaction degree at most~$3$ on $n$ qubits can
be sampled classically in polynomial time within multiplicative error
$2^{n^{s}}$ for some constant $0 \leq s < 1$, then $\PH \subseteq \Delta_3$.
\end{corollary}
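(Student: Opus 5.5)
The plan is to derive Corollary~\ref{cor:collapse} directly from Lemma~\ref{lem:template}, taking $\mathcal{C}$ to be the family of postselected depth-$1$ QAOA circuits of interaction degree at most~$3$. Lemma~\ref{lem:template} has two hypotheses. The first is a compilation property: every $\PostBQP$ computation of size $T$ compiles in polynomial time into a single circuit of $\mathcal{C}$ on $O(T)$ qubits with the same postselected output distribution. The second is the classical-sampling assumption with multiplicative error $c(n)\le 2^{n^s}$. The sampling assumption is exactly the hypothesis of the corollary. So the work reduces to checking the compilation property, and this is supplied by the ``moreover'' clause of Theorem~\ref{thm:degree3}.

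First I will restate what Theorem~\ref{thm:degree3} gives. It takes any postselected circuit over $\mathscr{G}=\{H,T^\dagger,CZ\}$ of size $T$ and produces, in polynomial time, a circuit of the form $\tilde{H}^{\otimes n} e^{-i\frac{\pi}{4}C}\ket{+}^{\otimes n}$. Here $C$ is $2$-local with interaction degree at most~$3$ and $n=O(T)$. Its postselection register consists of the original postselection qubits together with the auxiliary projections onto $\ket{0}$, and its designated output qubit is the qubit carrying the final segment of the original output wire.

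Next I need to confirm that ``the same postselected output distribution'' is the notion required in Lemma~\ref{lem:template}. The proof of that lemma uses only the ratio $p_1/p_0$ of joint probabilities of the output bit and the all-zeros postselection outcome. Theorem~\ref{thm:degree3} shows that the compiled joint amplitudes equal the original ones up to a global factor $2^{-h/2}$, where $h$ is the number of replaced Hadamards. That factor cancels in every such ratio, so the odds are preserved exactly.

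One point deserves care: the amplification in Lemma~\ref{lem:template} runs $k$ copies and takes a majority. The amplified computation is itself a postselected circuit over $\mathscr{G}$ of size $O(kt)$, except for the majority step. I would place the majority computation inside the circuit as reversible classical logic built from $\mathscr{G}$ (Toffoli is exactly expressible via $H$, $T$, $CZ$, and $T=(T^\dagger)^7$). That makes it a genuine $\mathscr{G}$-circuit of size $O(kt)$, so Theorem~\ref{thm:degree3} applies with $T=O(kt)$ and gives $n=O(kt)$ as the lemma requires. This bookkeeping is the only mildly delicate step. With it in place, Lemma~\ref{lem:template} yields $\PP\subseteq\PostBPP$ and hence $\PH\subseteq\Delta_3$.
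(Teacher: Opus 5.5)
Your proposal is correct and matches the paper's proof of Corollary~\ref{cor:collapse}. The paper likewise notes only that the compilation clause of Theorem~\ref{thm:degree3} supplies the hypothesis of Lemma~\ref{lem:template}, with $\mathcal{C}$ taken to be the postselected depth-$1$, degree-$3$ QAOA circuits; your extra step of placing the majority vote inside the $\mathscr{G}$-circuit fits the $O(kt)$ size bound the lemma already assumes, since a reversible majority on $k$ bits costs $O(k\log k)\subseteq O(kt)$ gates.
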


\begin{proof}
By Theorem~\ref{thm:degree3}, every $\PostBQP$ computation of size $T$ compiles
into a single depth-$1$, degree-$3$ QAOA circuit on $O(T)$ qubits with the same
postselected output distribution, so Lemma~\ref{lem:template} applies.
\end{proof}

\begin{corollary}\label{cor:iqp_collapse}
If IQP circuits of interaction degree at most~$3$ on $n$ qubits can be sampled
classically in polynomial time within multiplicative error $2^{n^{s}}$ for some
constant $0 \leq s < 1$, then $\PH \subseteq \Delta_3$.
\end{corollary}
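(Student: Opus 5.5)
The plan is to argue exactly as in Corollary~\ref{cor:collapse}, with the IQP compilation of Corollary~\ref{cor:degree3_iqp} in place of Theorem~\ref{thm:degree3}. Let $\mathcal{C}$ be the family of postselected IQP circuits $H^{\otimes n} D\, H^{\otimes n}\ket{0^n}$ in which $D$ has interaction degree at most~$3$. I would apply Lemma~\ref{lem:template} to $\mathcal{C}$. To do so I need to check its hypothesis: every $\PostBQP$ computation of size $T$ must compile in polynomial time into a single circuit of $\mathcal{C}$ on $O(T)$ qubits, with the same postselected output distribution.

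To establish the hypothesis, I would rerun the proof of Theorem~\ref{thm:degree3} with $F = H$ and $W = CZ$. By Proposition~\ref{prop:gadget} and Remark~\ref{rem:recover_gadgets}, this substitution realizes a Hadamard up to the factor $1/\sqrt{2}$. The preprocessing is the same as before.
\begin{enumerate}[label=(\roman*)]
\item Absorb each leading Hadamard into the preparation $\ket{+} = H\ket{0}$.
\item Separate consecutive diagonal gates by inserting $H^2$.
\item At the end of each wire, insert $HH$ and absorb the trailing $H$ into the final Hadamard layer. Unlike the QAOA case, this leaves no $1$-local phase.
\end{enumerate}
After substituting every intermediate Hadamard, each qubit starts in $H\ket{0}$ and ends under one $H$. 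All gates in between are diagonal and merge into a single $D$, so the compiled circuit lies in $\mathcal{C}$.

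The remaining properties transfer unchanged from the proof of Theorem~\ref{thm:degree3}:
\begin{itemize}
\item the postselections commute to the end and form one register;
\item the constant factors $1/\sqrt{2}$ cancel on conditioning;
\item the qubit count is $O(T)$ and the compilation runs in polynomial time;
\item each qubit lies on at most two $CZ$ chain edges and carries at most one original $2$-local gate, so the degree is at most~$3$.
\end{itemize}
This is precisely the content asserted in Corollary~\ref{cor:degree3_iqp}, so the hypothesis holds.

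Lemma~\ref{lem:template} then gives $\PH \subseteq \Delta_3$ directly from the assumed sampler with error $2^{n^s}$. The only point requiring care is that Corollary~\ref{cor:degree3_iqp} states the class equality, whereas Lemma~\ref{lem:template} needs the stronger exact, linear-size compilation of an arbitrary (amplified) postselected circuit. The proof of that corollary explicitly carries over the compilation guarantees of Theorem~\ref{thm:degree3}, so I would cite those guarantees rather than the class equality.
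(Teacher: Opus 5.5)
Your proposal is correct and matches the paper's proof: it applies Lemma~\ref{lem:template} to the family of degree-$3$ IQP circuits, using the exact, polynomial-time, $O(T)$-qubit compilation that the proof of Corollary~\ref{cor:degree3_iqp} inherits from Theorem~\ref{thm:degree3}. You are right to rely on those compilation guarantees rather than on the bare class equality; the paper's one-line proof depends on the same guarantees.
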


\begin{proof}
By Corollary~\ref{cor:degree3_iqp}, every $\PostBQP$ computation of size $T$
compiles into a single degree at most $3$ IQP circuit on $O(T)$ qubits with the
same postselected output distribution, so Lemma~\ref{lem:template} applies.
\end{proof}

\begin{remark}\label{rem:additive}
Additive error would not suffice in Corollary~\ref{cor:collapse}.  The
postselection event may have exponentially small probability, so an
additive-error guarantee on the joint distribution gives no useful bound on the
conditional distribution.  Extensions of such hardness to additive error have so far relied
on an average-case conjecture, as in the boson sampling and IQP
cases~\cite{aaronson2011computational, bremner2016average}.
\end{remark}

The hard instances of Theorem~\ref{thm:degree3} have a trivially optimizable
cost function.

\begin{proposition}\label{prop:optimization}
The cost function~$C$ constructed in Theorem~\ref{thm:degree3} can be made
monotone and maximized at $x = 1^n$ without altering the QAOA circuit.
\end{proposition}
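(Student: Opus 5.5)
The plan is to exploit the fact that the QAOA circuit depends on $C$ only through the phase $e^{-i\frac{\pi}{4}C(x)}$, and that $C$ is integer-valued with $\gamma=\pi/4$. Consequently $C$ is determined only modulo $8$ as far as the circuit is concerned. Adding $8$ times any integer-valued function to $C$, term by term, leaves every gate $e^{-i\frac{\pi}{4}C_i}$ unchanged. So the task is to choose, for each $2$-local term $C_i$ of the construction in Theorem~\ref{thm:degree3}, integer shifts that are $\equiv 0 \pmod 8$ on each assignment, so that the new term is monotone nondecreasing in its variables. The shifts must also leave the interaction graph unchanged, so that degree $3$ is preserved.

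The concrete step is as follows. For a term $C_i$ on variables $(x_j,x_k)$, replace it by
\[
C_i'(x_j,x_k) = C_i(x_j,x_k) + 8 m_i(x_j,x_k),
\]
with
\[
m_i(x_j,x_k) = M(x_j + x_k + x_j x_k),
\]
where $M$ is a sufficiently large positive integer, for instance $M$ exceeding $\max|C_i|$. Each original value is an integer, since the phases are eighth roots of unity. Then $C_i'$ is congruent to $C_i$ modulo $8$ pointwise, so $e^{-i\frac{\pi}{4}C_i'} = e^{-i\frac{\pi}{4}C_i}$.

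It remains to check monotonicity. The added function $8M(x_j+x_k+x_jx_k)$ increases by at least $8M$ whenever a bit flips from $0$ to $1$. This dominates any decrease of $C_i$, which is bounded by $2\max|C_i|$. For $1$-local terms use $8Mx_j$ instead. Summing gives a $C'$ that is coordinatewise nondecreasing in every variable, and in fact strictly increasing, so its unique maximum is $x=1^n$.

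The only subtlety, and the step I expect to need care, is the interaction graph. The added piece $x_jx_k$ couples only variables already coupled by $C_i$, and $1$-local additions create no edges, so no new edges appear. An edge could disappear only if $C_i'$ lost its dependence on a variable, which strict monotonicity rules out. The interaction degree therefore stays at most $3$.

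The decomposition into terms is also retained, so the QAOA circuit, which is the product of the unchanged gates, is literally identical. Since the costs of the gadget $W=\diag(1,i,1,-i)$ and of the $T^\dagger$, $CZ$ phases are bounded constants, $M$ can be a fixed constant such as $M=1$ after first reducing each $C_i$ to $\{0,\dots,7\}$.
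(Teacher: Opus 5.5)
Your proof is correct. It starts from the same observation as the paper: at $\gamma=\pi/4$ with integer costs, each term can be shifted pointwise by multiples of $8$ without changing its gate. Where you differ is in how you obtain monotonicity.

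The paper inspects the four gate types of the construction and picks, for each, a congruent representative whose multilinear form has nonnegative coefficients:
$x_j$ for $T^\dagger$, $4x_jx_k$ for $CZ$, $6x_j+4x_ax_j$ for $W$ (via $(0,6,0,2)\mapsto(0,6,0,10)$), and $7+2x_j$ for the endpoint phase (via $(7,1)\mapsto(7,9)$). Monotonicity is then read off from the signs.

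You instead reduce each term into $\{0,\dots,7\}$ and add the uniform padding $8(x_j+x_k+x_jx_k)$, or $8x_j$ for a $1$-local term. Each flipped bit raises the padding by at least $8$, which dominates any drop of at most $7$.

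What each route buys:
\begin{itemize}
\item Your argument is gate-agnostic. It applies verbatim to any integer-valued $2$-local cost run at $\gamma=\pi/4$, and it gives strict monotonicity.
\item The paper's argument gives small explicit coefficients and the structural fact that $C$ is a nonnegative combination of monomials. It gives only weak monotonicity (for instance, $6x_j+4x_ax_j$ is flat in $x_a$ when $x_j=0$), and it depends on the specific gates.
\end{itemize}

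Two small points in your write-up. First, the claim that adding $x_jx_k$ creates no new edge uses that every two-variable term of the construction depends nontrivially on both variables. This is true for $CZ$ and $W$; in general you should pad a term only by the variables it actually depends on. Second, uniqueness of the maximizer uses that every variable occurs in some term. This holds here, since every qubit lies in some $W$ coupling, though the statement does not require uniqueness.
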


\begin{proof}
Since $e^{-i(\pi/4) C}$ depends only on $C$ modulo $8$, individual diagonal
entries may be shifted by multiples of $8$ without altering the circuit.  As
multilinear polynomials in the bits, the $T^\dagger$ and $CZ$ terms are $x_j$
and $4 x_j x_k$, with nonnegative coefficients.  Writing $W = e^{-i(\pi/4) C_W}$
with diagonal $C_W = (0,6,0,2)$ on qubits $(a,j)$, the congruent diagonal
$(0,6,0,10)$ has multilinear form $6 x_j + 4 x_a x_j$, and replacing the
endpoint diagonal $(7,1)$ by the congruent $(7,9)$ gives $7 + 2 x_j$.  Every
term is then a multilinear polynomial with nonnegative coefficients, hence
nondecreasing in each bit, so $C$ is monotone and maximized at $x = 1^n$.
\end{proof}

\section{Tractability at interaction degree two}\label{sec:easy}

We now show that the degree-$3$ bound is tight.  A graph of maximum degree $2$
decomposes into paths, cycles, and isolated vertices, and the QAOA circuit on
each component has small cut width.

\begin{definition}\label{def:cutwidth}
A gate on qubits~$S$ crosses cut~$i$ if there exist $j,j'\in S$ such that $j \le i < j'$.  Single-qubit gates cross no cut.  The \emph{cut width} of a
circuit on qubits~$[n]$ is the maximum number of gates crossing any single cut.
\end{definition}

Markov and Shi simulate a circuit by contracting its tensor network along a
linear qubit ordering, at cost exponential in the number of gates crossing any
single cut.

\begin{proposition}[Markov--Shi {\cite[Proposition~5.1]{markov2008simulating}}]
\label{prop:markov_shi}
A quantum circuit of size~$T$ and cut width~$r$, in which each gate acts on
$O(1)$ qubits and has equal numbers of input and output qubits, can be simulated
deterministically in time $T^{O(1)}\exp[O(r)]$.
\end{proposition}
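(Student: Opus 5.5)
The plan is to reprove the cited Proposition by contracting a tensor network column by column along the qubit ordering $1,\dots,n$, so that the intermediate tensors live on the cut. Represent the circuit, with fixed input and output basis states (or open output indices for marginals), as a tensor network. Each gate is one tensor, and each wire segment between consecutive gates on a qubit is an index of dimension $2$. Because a gate acts on $O(1)$ qubits with equal numbers of inputs and outputs, I first split each multi-qubit gate $G$ on qubits $j_1<\dots<j_m$ into a chain of per-qubit tensors $G^{(j_1)},\dots,G^{(j_m)}$. This uses an operator-Schmidt decomposition (iterated SVD across each internal bipartition), and the connecting bond indices have dimension at most $4^{m}=O(1)$. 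After splitting, every tensor belongs to a single qubit column. The only edges between columns are these bonds, and a bond joins column $j_\ell$ to $j_{\ell+1}$. Hence the bonds crossing cut $i$ are exactly in bijection with the gates crossing cut $i$, so at most $r$ bonds cross any cut, each of dimension $O(1)$.

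Next I sweep $i=1,\dots,n$. I maintain a tensor $\Phi_i$ equal to the contraction of all tensors in columns $1,\dots,i$; its open indices are exactly the bonds crossing cut $i$, so it has $\exp[O(r)]$ entries. To pass from $\Phi_i$ to $\Phi_{i+1}$, I absorb the tensors of column $i+1$ one at a time in temporal order, contracting along the wire index of qubit $i+1$ as I go. At any moment the open indices are of three kinds: the bonds across cut $i$ not yet consumed, the bonds across cut $i+1$ already created, and one current wire index. Both bond families are bounded by $r$, so every intermediate tensor has size $\exp[O(r)]$. Each absorption step costs $\exp[O(r)]\cdot O(1)$, and there are $O(T)$ tensors in total, giving time $T^{O(1)}\exp[O(r)]$. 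The final $\Phi_n$ is a scalar, namely the amplitude.

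For probabilities and marginals, which the subsequent application to QAOA needs, I would apply the same procedure to the doubled network for $\bra{\psi}\Pi\ket{\psi}$. Here $\Pi$ is a projector on the marginal qubits and the identity elsewhere, so the unmeasured output wires are contracted between the ket and bra copies within their own column. Each gate crossing a cut now contributes two bonds, so the effective cut width is at most $2r$ and the bound is unchanged. Sampling then follows by computing conditional marginals bit by bit.

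The main obstacle is the intra-column bookkeeping: showing that, while absorbing column $i+1$, no more than $O(r)$ inter-column indices are simultaneously open. A naive contraction of the entire column first would produce a tensor with one index per bond touching that column, and this can be unbounded. The resolution I would try first is exactly the temporal-order absorption above. Each bond touching column $i+1$ crosses either cut $i$ or cut $i+1$, and since both cuts are bounded by $r$, the invariant holds.
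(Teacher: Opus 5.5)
Your proof is correct and is essentially the approach the paper attributes to Markov--Shi; the paper gives no proof beyond the citation. That approach contracts the network column by column along the qubit ordering, so every intermediate tensor carries only the $O(r)$ bonds across two adjacent cuts, and uses the doubled network $\bra{\psi}\Pi\ket{\psi}$ for marginals. Two small points: split each gate by the trivial index-copying decomposition (bond dimension $4^{m-\ell}$) rather than by SVD, so that no eigenvalue computations or square roots enter and the paper's exact-arithmetic remark over $\mathbf{Q}(e^{i\pi/4})$ still applies; and contracting a whole column first is not actually an obstacle, since by your own observation at most $2r$ bonds touch any column.
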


By simulation we mean computing $\Pr[Z_S = z_S]$ for a specified subset $S
\subseteq [n]$ and outcome $z_S$.  

\begin{theorem}\label{thm:degree2}
For every depth-$p$ QAOA circuit on $n$ qubits with interaction degree at
most~$2$, every $S \subseteq [n]$, and every $z_S \in \{0,1\}^{|S|}$, the
marginal probability $\Pr[Z_S = z_S]$ can be computed deterministically in time
$(np)^{O(1)}\exp(O(p))$.
\end{theorem}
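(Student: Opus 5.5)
The plan is to reduce Theorem~\ref{thm:degree2} to Proposition~\ref{prop:markov_shi} by ordering the qubits so that the QAOA circuit has cut width $O(p)$. The interaction graph has maximum degree at most~$2$, so it is a disjoint union of paths, cycles, and isolated vertices.

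\textbf{Step 1: choose the ordering and count cut crossings.} I would relabel the qubits so that each connected component occupies a contiguous block of indices, ordered as follows.
\begin{itemize}
\item A path $v_1 v_2 \cdots v_m$ is listed in path order.
\item A cycle $v_1 v_2 \cdots v_m v_1$ is listed in the zigzag order $v_1, v_m, v_2, v_{m-1}, v_3, \ldots$.
\end{itemize}
In the zigzag order, every cycle edge joins vertices at positions differing by at most~$2$. Hence any cut inside a cycle block is crossed by at most a constant number of edges (at most~$4$). A cut inside a path block is crossed by exactly one edge. A cut between blocks is crossed by none.

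Each layer $e^{-i\gamma_k C}$ factors into commuting gates, and I group the terms so that there is one two-qubit diagonal gate per edge. One-local terms either become single-qubit gates or are absorbed into an incident edge gate. The mixer $e^{-i\beta_k B}$ consists of single-qubit gates, which cross no cut.

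Over $p$ layers, each cut is therefore crossed by at most $4p = O(p)$ gates. The total size is $T = O(np)$, since there are at most $n$ edges per layer. The relabeling is a permutation of qubits, so it does not change the output distribution, and marginals on $S$ translate directly.

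\textbf{Step 2: encode marginals as a circuit amplitude.} Proposition~\ref{prop:markov_shi} simulates circuits, but we need $\Pr[Z_S=z_S]$. I would write this probability as
\[
\Pr[Z_S=z_S]=\langle\psi|\,\Pi\,|\psi\rangle,
\qquad \Pi=\bigotimes_{j\in S}\ket{z_j}\bra{z_j},
\]
and contract the tensor network for $U^\dagger \Pi U$ applied to $\ket{+}^{\otimes n}$ and closed with $\bra{+}^{\otimes n}$. Here $U$ is the QAOA circuit. Qubits outside $S$ are traced out, which happens automatically because $U^\dagger U$ acts as the identity on them.

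Writing the network this way doubles the number of gates crossing each cut, to at most $8p$. It adds only single-qubit projectors and boundary vectors, which cross no cut. Markov--Shi contract along the linear order, so the cost is $T^{O(1)}\exp(O(r))$ with $r=O(p)$. This gives the claimed bound $(np)^{O(1)}\exp(O(p))$.

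\textbf{Main obstacle.} The delicate step is matching Proposition~\ref{prop:markov_shi}'s hypotheses, which are stated for unitary gates with equal input and output arity, when the network includes the projector and the doubled (bra--ket) circuit. I would first check that Markov--Shi's Proposition~5.1 already covers computing such marginal probabilities. Their framework handles measurement of a subset by building exactly this doubled network, so citing it should be enough. If it does not, I would fall back to doing the contraction directly. Treat the doubled network as a tensor network whose vertices are ordered by qubit. Sweeping through the qubits in order, the open indices at each cut number at most twice the number of crossing gates times a constant. The running contraction therefore costs $\exp(O(p))$ per step, for $O(np)$ steps in total.
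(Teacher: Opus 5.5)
Your proposal is correct and follows the paper's route: decompose the degree-$\le 2$ interaction graph into paths and cycles, order the qubits so each cut is crossed by $O(1)$ edges per layer (cut width $O(p)$), and invoke Markov--Shi. The differences are cosmetic. You use one global ordering with contiguous component blocks where the paper factors the state over components. Your zigzag order for cycles is unnecessary, since in the natural cyclic order each cut is crossed only by $\{i,i+1\}$ and the wrap-around edge, giving cut width $2p$. Your Step 2 spells out the doubled network that the paper takes as already included in Markov--Shi's notion of simulating a marginal $\Pr[Z_S=z_S]$.
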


\begin{proof}
Phase gates couple only qubits within the same component and mixer gates act on
single qubits, so the state factors as $\ket{\psi} = \bigotimes_c \ket{\psi_c}$.
It therefore suffices to simulate each component independently.

A graph of maximum degree~$2$ decomposes into isolated vertices, paths, and
cycles.  An isolated vertex sees only single-qubit gates, so its output
distribution is computable in $O(p)$ operations.  For a linear ordering of the
$k$ vertices of a component, let $\partial(i)$ denote the number of
interaction-graph edges crossing cut~$i$.  Within each phase layer the terms on
a pair are commuting diagonal gates and merge into a single two-qubit gate, so
each edge contributes one gate per layer and the circuit cut width is $p \cdot
\max_i \partial(i)$.  For a path in the natural ordering, each edge $\{i,i{+}1\}$
crosses only cut~$i$, so $\max_i \partial(i) = 1$ and the cut width is~$p$.  For a
cycle, the wrap-around edge $\{1,k\}$ crosses every cut, so $\max_i \partial(i) =
2$ and the cut width is~$2p$.

In every case the cut width is $O(p)$ and the circuit has size $O(kp)$.
Proposition~\ref{prop:markov_shi} gives runtime $(kp)^{O(1)}\exp(O(p))$ per
component, and summing over components of total size~$n$ gives
$(np)^{O(1)}\exp(O(p))$.
\end{proof}

\begin{corollary}\label{cor:sampling}
The output distribution of a depth-$p$ QAOA circuit on $n$ qubits with
interaction degree at most~$2$ can be sampled exactly by a classical randomized
algorithm in time $(np)^{O(1)}\exp(O(p))$.  When $p = O(\log n)$, this is
$n^{O(1)}$.
\end{corollary}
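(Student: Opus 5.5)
The plan is the standard reduction from sampling to computing marginals, i.e.\ the chain rule, with Theorem~\ref{thm:degree2} as the marginal oracle. Fix an ordering of the qubits $1,\ldots,n$. We sample the output bits one at a time. Suppose $z_1,\ldots,z_{i-1}$ have already been fixed. For $S=\{1,\ldots,i\}$, compute $q_b=\Pr[Z_1=z_1,\ldots,Z_{i-1}=z_{i-1},Z_i=b]$ for $b\in\{0,1\}$ using Theorem~\ref{thm:degree2}. Then set $z_i=b$ with probability $q_b/(q_0+q_1)$. By the chain rule, the resulting string $z$ is distributed exactly as the QAOA output distribution. There is one degenerate case: if $q_0+q_1=0$, the prefix itself has probability zero. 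This cannot happen, because each prefix is chosen with positive conditional probability.

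\textbf{Cost.} The algorithm makes $2n$ calls to the marginal routine, each taking time $(np)^{O(1)}\exp(O(p))$. The total is $(np)^{O(1)}\exp(O(p))$. When $p=O(\log n)$, we have $\exp(O(p))=n^{O(1)}$ and $(np)^{O(1)}=n^{O(1)}$, so the bound is $n^{O(1)}$.

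\textbf{Exact coin flips.} The main subtlety is drawing a biased coin with exact probability $q_0/(q_0+q_1)$ using only uniform random bits. For this we use the remark in Section~\ref{sec:intro}. In the bit model the $q_b$ are exactly computable, for instance in $\mathbf{Q}(e^{i\pi/4})$ with polynomially bounded bit length, so their ratio is a computable real. We draw uniform random bits one at a time and compare them against the binary expansion of the ratio, stopping at the first position where the two differ. Each comparison stops after an expected $O(1)$ bits, and each digit can be computed in polynomial time. The procedure is therefore exact and runs in expected time $(np)^{O(1)}\exp(O(p))$. For arbitrary real parameters, the statement is read in the real-arithmetic model, where we assume a uniform real sample is available.

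I expect this step to be the only real obstacle, since the rest is a direct application of Theorem~\ref{thm:degree2}.
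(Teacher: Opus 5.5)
Your proposal is correct and follows the paper's proof: sequential chain-rule sampling with $2n$ calls to Theorem~\ref{thm:degree2}, for a total cost of $(np)^{O(1)}\exp(O(p))$. Your handling of exact biased coins, which compares uniform random bits against the computable binary expansion of $q_0/(q_0+q_1)$ and therefore gives an expected rather than worst-case time bound, is the same device the paper describes in its introductory remark on the bit model.
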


\begin{proof}
Sample $z_1, \ldots, z_n$ sequentially by the chain rule.  At step $t$, two
calls to Theorem~\ref{thm:degree2} give $\Pr[Z_1 = z_1, \ldots, Z_{t-1} =
z_{t-1},\, Z_t = b]$ for $b \in \{0,1\}$, and we draw $z_t$ from the conditional
distribution.  After $n$ steps, $z = z_1 \cdots z_n$ is drawn from the exact
output distribution.  The $2n$ calls cost $(np)^{O(1)}\exp(O(p))$ total, which
is $n^{O(1)}$ when $p = O(\log n)$.
\end{proof}

\section{Discussion}\label{sec:discussion}

We see three natural directions for further work.  First, a quantitative
question remains on each side of the threshold.  The degree-$2$ simulation runs
in time exponential in~$p$, and whether this dependence can be removed is open.
The collapse argument, in turn, tolerates multiplicative error $2^{n^{s}}$ for
every fixed $s < 1$ but not at the supremum, and whether hardness persists at
error $2^{cn}$ for some $c > 0$ is open.

Second, our hardness results are proved for multiplicative error, and the
additive-error analogue remains open.  Existing arguments for additive error, as
in the boson sampling and IQP cases~\cite{aaronson2011computational,
bremner2016average}, pair anticoncentration of a random instance family with a
conjecture that approximating its output probabilities is \#P-hard on average.
For depth-$1$ QAOA with $2$-local cost functions, both would concern random cost
functions on degree-$3$ graphs.  The worst-case form of this hardness does
follow from our construction, since a multiplicative approximation of the
compiled output probabilities recovers the postselected odds and hence decides
$\PP$.  The closest IQP results~\cite{bremner2016average} do not transfer.
Their anticoncentration proof needs a random weight on every pair of qubits and
even their sparsest average-case families~\cite{bremner2017achieving} have
degree $\Theta(\log n)$.

Finally, the instances behind our hardness results have trivially optimizable
cost functions, so their sampling advantage carries no optimization advantage.
We would therefore like a characterization of the instance families in which the
output distribution is hard to sample and the cost function is nontrivial to
optimize.  Any optimization advantage of QAOA must occur in such a family, since
a classical sampler for the output distribution replicates the optimization
performance.

\subsection*{Acknowledgment}
This research was supported by QuantERA project HQCC (Hybrid Quantum Classical
Computation).

\bibliographystyle{amsalpha}
\bibliography{references}

\end{document}